\newcommand{\ex}{{\mathbb{E}\,}}
\newcommand{\var}{{\textrm{Var}\,}}
\newcommand{\vm}{\mathbf}
\newcommand{\vw}{\vm{w}}
\newcommand{\vx}{\vm{x}}
\newcommand{\vM}{\vm{M}}
\newcommand{\vX}{\vm{X}}
\newcommand{\vZ}{\vm{Z}}
\newcommand{\vmu}{\bm{\mu}}
\newcommand{\vnu}{\bm{\nu}}
\newcommand{\vSigma}{\bm{\Sigma}}
\newcommand{\vXi}{\bm{\Xi}}
\newcommand{\sR}{\mathcal{R}}
\newcommand{\hmu}{\hat{\vmu}}
\newcommand{\cmu}{\check{\vmu}}
\newcommand{\vone}{\bm{1}}
\newtheorem{theorem}{Theorem}
\newtheorem{definition}{Definition}
\newtheorem{lemma}{Lemma}
\newtheorem{example}{Example}
\newcommand{\blind}{1}
\begin{document}

\def\spacingset#1{\renewcommand{\baselinestretch}%
{#1}\small\normalsize} \spacingset{1}


\if1\blind
{
  \title{\bf A Statistical Model with Qualitative Input}
  \author{Seksan Kiatsupaibul\\
  		Department of Statistics, \\Chulalongkorn University,\\Bangkok 10330, Thailand \\seksan@cbs.chula.ac.th
  		\and
  		Pariyakorn Maneekul\\
  		Department of Industrial and Systems Engineering, \\University of Washington, \\Seattle, Washington 98195 \\parim@uw.edu
  		}
  \maketitle
} \fi

\if0\blind
{
  \bigskip
  \bigskip
  \bigskip
  \begin{center}
    {\LARGE\bf A Statistical Model with Qualitative Input and an Application to Portfolio Selection}
\end{center}
  \medskip
} \fi

\bigskip
\begin{abstract}
A statistical estimation model with qualitative input provides a mechanism to fuse human 
intuition in the form of qualitative information into a statistical model.  
We investigate the statistical properties of this model and devise a numerical computation method 
for a model subclass with a uniform correlation structure. 
We show that, within this subclass, qualitative information can be as useful as quantitative information.  
We also show that the correlation between variables compromises the accuracy of the statistical estimate.  
However, the adverse effect from the correlation can be minimal, as is illustrated in 
an application to portfolio selection.   
The proposed model, when used in conjunction with approximation techniques, is shown to have potential 
for portfolio selection with financial data. 
\end{abstract}

\noindent%
{\it Keywords:}  Normal distribution, conditional expectation, constrained statistical model, portfolio optimization, qualitative data
\vfill

\newpage
\spacingset{1.9} 
\section{Introduction}
\label{sec:intro}

Qualitative information is often considered less precise than its quantitative counterpart.
However, there are instances where qualitative information can be just as informative as quantitative information.  
To illustrate this, consider a scenario involving standardized test scores.  
Suppose we have a population of test takers, and their scores follow a normal distribution with a mean of 100 and a standard deviation of 10.  
Let us randomly sample one hundred test takers from this population. 
Instead of knowing the exact score of each test taker, we are only provided with their ranks.  
Despite not having the exact scores, using our knowledge of the normal distribution, we can determine that the fifth highest score is approximately 116.45, which is the mean plus 1.645 standard deviations. 
In this case, even though we only have qualitative information (the ranks), it is as informative as knowing the exact scores (quantitative information).  
This example highlights how qualitative information can provide valuable insights comparable to those obtained from quantitative information

In the given example, the statistical inference model comprises a prior probability distribution 
representing the measurement of interest, along with a side qualitative data. 
This inference model estimates the measurement from 
its quantitative prior conditioned on the side qualitative information. 
The ability to factor such qualitative information into a quantitative prior 
opens ways to combine human intuition into quantitative data models.

From this perspective, \cite{Almgren2006} and \cite{Chiarawongse2012} independently 
introduced a portfolio optimization model with qualitative input, 
which combines a qualitative view of an investor into a return estimation of financial assets.  
Subseqently, \cite{Eranda2021} extended the model to incorporate uncertainty in the qualitative view.  
The models by \cite{Chiarawongse2012} and \cite{Eranda2021} 
are derived from the vision of \cite{black:1991,black:1992} 
to fuse human view into quantitative data model.  
However, the view in the Black-Litterman model is also of a quantitative nature, and is 
deemed to be too demanding for human investors.
In a simulation study, \cite{Chiarawongse2012} reported that a qualitative input in the form of 
stock ranking when integrated with the expected return estimation can enhance the 
performance of a portfolio significantly.  This is especially true as the number of stocks increases. 
However, they did not provide any justification as to why such improvement in performance 
can be observed.  Our objective in this paper is to explain how such 
statistical model with qualitative input can perform so well.  
In the process, we also discuss the model properties and its limitation.  
We will see that dependence among variables compromises the performance of the model in 
an interesting way.

To investigate the properties of the proposed statistical model, one requires an estimation method.  
For this model, the estimation involves solving an integration problem over a convex polytope in high dimensions, which in general is intractable numerically \citep{Khachiy1989}. 
The solution method usually relies on a Markov chain Monte Carlo (MCMC) 
\citep{smith:1984, Kaufman1998, lovasz:2006:a, lovasz:2006:b, Kiatsupaibul2011}
in which, by allowing a confidence level lower than one, the method can solve the problem in polynomial time.

Even though the integration problem cannot be solved numerically in general, 
a numerical method can be devised to carry out the required integration for this particular model.  
Specifically, we consider a prior distribution that is multivariate normal with a uniform correlation structure 
together with qualitative information in the form of complete ranking. This specific problem is proved to 
be quite challenging for an MCMC.  However, \cite{Kiatsupaibul2017} proposed a solution method based on 
a recursive integration technique \citep{Hayter2006} that can be adapted to solve 
this specific problem in $O(n^2)$, where $n$ is the number of variables. 
In the example with ranking of stocks, $n$ is the number of stocks..  
In this paper, we present an adapted solution method based on that of \cite{Kiatsupaibul2017} 
to solve the problem.

The adapted solution method not only enables us to explore the properties of the model 
but also lends itself to practical use.  
An application of statistical inference with qualitative input can be found in the portfolio selection problem 
where statistical estimates of future expected returns are the main decision parameters \citep{Markowitz1952, Best1991, Chiarawongse2012}.  
With the solution method, we extend the simulation study in \cite{Chiarawongse2012}, 
by investigating the effects of the correlation coefficient on portfolio performance with perfect ranking information.
In addition to the simulation study, we also apply the proposed model to the portfolio selection problem with financial data from the US stock market. In the study, we observe the performances of the portfolios formed by the proposed model against a benchmark, when the ranking information is both perfect and imperfect.  When the information is imperfect, we show how the shrinkage estimator that incorporates confidence level into the model can improve the performance of the portfolio selection using the estimation based on data and ranking.

The organization of this paper is as follows.  In Section 2, the statistical inference model 
with qualitative input under investigation is defined.  The main result is that the asymptotic properties of 
the model is stated for a subclass where the prior distribution is normal with 
a uniform correlation structure and the qualitative input is in the form of ranking.  
In Section 3, the finite dimensional properties of the model are explored through the proposed recursive 
integration technique.  
In this section, the recursive integration technique adapted from \cite{Kiatsupaibul2017} is also described.  
In Section 4, the properties of the model with perfect qualitative information 
when applied to a portfolio selection problem is investigated.  
In Section 5, we exhibit the performance of the model when applied to the portfolio 
selection problem in an imperfect information setting.  In Section 6, a conclusion is provided.

\section{Statistical models with qualitative input}

Let $\vX=[X_1, \ldots, X_n]^\top$ be a normally distributed $n$ vector with mean vector $\vmu$ and 
covariance matrix $\vSigma$, i.e., $\vX\sim N(\vmu,\vSigma)$.  Let $\sR$ be a polytope defined by a 
set of linear inequalities.
\[\sR = \{\vx\in\Re^n\ :\ \vm{A}\vx\leq \vm{b}\},\]
where $\vm{A}$ and $\vm{b}$ are an $m\times n$ real matrix and a $m$ real vector, respectively.
The inference problem of concern is to compute the conditional expectation
\begin{equation}\label{eqn:estimate}
\ex[\vX\ |\ \vX\in \sR].
\end{equation}
The polytope $\sR$ represents a qualitative input, and this model can be interpreted as 
an estimation problem of a statistical quantity $\vX$ given that a qualitative information is available 
\citep{Chiarawongse2012, Kiatsupaibul2017}.

In this study, we limit ourselves to a particular model of $\sR$ and $\vX$.  
We restrict our attention to $\sR$ formed by a complete ranking, i.e.,
\begin{equation}\label{eqn:ranking}
\sR = \{\vx=[x_1, \ldots, x_n]^\top \in\Re^n\ :\ x_i\leq x_{i+1}, i=1,\ldots,n-1\}.
\end{equation}
We also call the model with a complete ranking $\sR$ as the \emph{rank constrained model}.
For the distribution of $\vX$, our study mainly concern the case where the correlation matrix possesses a uniform structure.

\begin{definition}\label{def:unifrho}
A normal random vector $\vX$ possesses a uniform correlation structure if the correlation coefficients between any pair of variables 
$X_i$ and $X_j$, for all $i,j\in\{1,\ldots,n\}$ and $i\neq j$, are all equal to a constant $0\leq \rho \leq 1$. 
In addition, if each variable $X_i$, for all $i=1,\ldots,n$, also has the standard normal (marginal) distribution, 
$N(0,1)$, we say that the vector $\vX$ possesses a standard uniform correlation structure.
\end{definition}

Observe that, if $\vX$ possesses a uniform correlation structure, it can be represented by a one-factor model as follows. 
Assume that $M$ and $Z_i, i=1,\ldots,n$ are independent and identically distributed (iid) random variables with distribution $N(0,1)$, 
\begin{equation}\label{eqn:onefactor}
X_i = \sigma_i(\sqrt{\rho} M +  \sqrt{1-\rho}Z_i) + \mu_i, \text{for } i = 1,\ldots,n,
\end{equation}
where $0\leq \rho\leq 1$.  In this representation, the random variable $M$ is a common 
factor that generates dependence among $X_i$'s.  
The random vector $\vX$ that possesses a standard uniform correlation structure can be represented by the one-factor model in 
(\ref{eqn:onefactor}) with $\mu_i=0$ and $\sigma_i=1$, for all $i=1,\ldots,n$.  
In what follows, for a set of random variable $X_1, \ldots, X_n$, we write their order statistics as 
\[X_{(1)},\ldots,X_{(n)}.\]

The following Theorem~\ref{thm:limqual} demonstrates that a ranking information has a potential to 
enhance the accuracy of a statistical estimate of a random vector in high dimensions. 
It also set a limitation of the estimation accuracy based on the dependence among variables.  

\begin{lemma}\label{lem:orderlimit}
Let $Z_1, \ldots, Z_n$ be iid $N(0,1)$.  Let $Z_{(\lceil np \rceil)}$ denote the 
$\lceil np \rceil$ order statistic of $Z_1, \ldots, Z_n$. Let $\Phi$ denote the distribution function of $N(0,1)$
and $\Phi^{-1}$ denote its inverse function.  We have, with probability one,
\begin{equation}\label{eqn:strongconv}
Z_{(\lceil np \rceil)} \to \Phi^{-1}(p).
\end{equation}
Furthermore, as $n\to\infty$,
\begin{equation}\label{eqn:exconv}
\ex[Z_{(\lceil np \rceil)}] \to \Phi^{-1}(p),
\end{equation}
and
\begin{equation}\label{eqn:varconv}
\var[Z_{(\lceil np \rceil)}] \to 0.
\end{equation}
\end{lemma}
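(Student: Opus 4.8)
The plan is to prove the almost-sure convergence first and then deduce the two moment statements via a uniform-integrability argument. For the strong convergence in~(\ref{eqn:strongconv}), I would invoke the Glivenko--Cantelli theorem: if $F_n$ denotes the empirical distribution function of $Z_1,\ldots,Z_n$, then $\sup_{x}|F_n(x)-\Phi(x)|\to 0$ almost surely. Since $Z_{(\lceil np\rceil)}$ is (essentially) the empirical $p$-quantile, one then needs the standard fact that uniform convergence of distribution functions forces convergence of quantiles at every continuity point of the limiting quantile function; because $\Phi$ is continuous and strictly increasing on all of $\Re$, $\Phi^{-1}$ is continuous at $p$ for every $p\in(0,1)$, so $Z_{(\lceil np\rceil)}\to\Phi^{-1}(p)$ with probability one. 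Concretely, for any $\varepsilon>0$ I would sandwich the event $\{Z_{(\lceil np\rceil)}>\Phi^{-1}(p)+\varepsilon\}$ inside $\{F_n(\Phi^{-1}(p)+\varepsilon)<\lceil np\rceil/n\}$ and use $\lceil np\rceil/n\to p<\Phi(\Phi^{-1}(p)+\varepsilon)$ together with Glivenko--Cantelli; the lower tail is symmetric.

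For the convergence of the mean~(\ref{eqn:exconv}) and of the variance~(\ref{eqn:varconv}), almost-sure convergence alone is not enough — I need convergence of first and second moments, which I would obtain by showing the sequence $\{Z_{(\lceil np\rceil)}\}_n$ is bounded in $L^q$ for some $q>2$, hence $\{Z_{(\lceil np\rceil)}\}$ and $\{Z_{(\lceil np\rceil)}^2\}$ are uniformly integrable. The cleanest route is a crude bound on order statistics of Gaussians: $|Z_{(k)}|\le \max_{1\le i\le n}|Z_i|$, and standard Gaussian maximal estimates give $\ex[\max_i |Z_i|^q] = O((\log n)^{q/2})$, which does not immediately bound the sequence. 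So instead I would use a sharper, quantile-centered tail bound: write $Z_{(\lceil np\rceil)} = \Phi^{-1}(p) + \big(Z_{(\lceil np\rceil)}-\Phi^{-1}(p)\big)$ and control the deviation term. By a Dvoretzky--Kiefer--Wolfowitz inequality applied to $F_n$, $\pr(|Z_{(\lceil np\rceil)}-\Phi^{-1}(p)|>t)$ decays like $e^{-cnt^2}$ for $t$ in a fixed neighbourhood of $0$ and, for large $t$, is dominated by the Gaussian tail $\pr(\max_i|Z_i|>\Phi^{-1}(p)+t)\le 2n\,\bar\Phi(\Phi^{-1}(p)+t)$; integrating these tail bounds shows $\sup_n \ex\big[(Z_{(\lceil np\rceil)}-\Phi^{-1}(p))^2\big]<\infty$ and in fact $\to 0$, which gives both~(\ref{eqn:exconv}) (since $L^2$ convergence implies $L^1$ convergence) and~(\ref{eqn:varconv}) directly, because $\var[Z_{(\lceil np\rceil)}]\le \ex\big[(Z_{(\lceil np\rceil)}-\Phi^{-1}(p))^2\big]$.

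An alternative, perhaps more self-contained, route to the moment statements is the Bahadur representation of sample quantiles, which yields $Z_{(\lceil np\rceil)} = \Phi^{-1}(p) + \frac{1}{\phi(\Phi^{-1}(p))}\big(p - F_n(\Phi^{-1}(p))\big) + R_n$ with $R_n$ negligible; since $F_n(\Phi^{-1}(p))$ is a binomial average with variance $p(1-p)/n$, this makes the $O(1/n)$ decay of the variance transparent and shows the bias is $o(1)$. I would, however, still need an integrable envelope to justify passing expectations through the representation, so the tail estimate above is doing the essential work either way.

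The main obstacle I anticipate is precisely establishing uniform integrability — i.e., ruling out that a vanishing-probability event on which $Z_{(\lceil np\rceil)}$ is atypically large (for $p$ near $1$, the relevant order statistic can be close to the sample maximum) contributes non-negligibly to the mean or variance. This is handled by the two-regime tail bound (DKW in the bulk, union bound plus Gaussian tail in the extreme), but getting the constants in the extreme regime to interact correctly with $\lceil np\rceil$ when $p$ is allowed to be any fixed value in $(0,1)$ is the delicate bookkeeping step; everything else is routine.
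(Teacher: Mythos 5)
The paper never actually proves this lemma: it is stated as a known fact about sample quantiles and then invoked in the proof of Theorem~\ref{thm:limqual}, so there is no in-paper argument to compare yours against. Your route is the standard one and is essentially sound. For (\ref{eqn:strongconv}), Glivenko--Cantelli together with the continuity and strict monotonicity of $\Phi$ (hence continuity of $\Phi^{-1}$ at every $p\in(0,1)$) gives almost-sure convergence, and your sandwich of $\{Z_{(\lceil np\rceil)}>\Phi^{-1}(p)+\varepsilon\}$ inside an event about $F_n$ is exactly the textbook argument. You are also right that the moment statements (\ref{eqn:exconv}) and (\ref{eqn:varconv}) genuinely require a uniform-integrability step beyond almost-sure convergence, and that $\var[Z_{(\lceil np\rceil)}]\le \ex\bigl[(Z_{(\lceil np\rceil)}-\Phi^{-1}(p))^2\bigr]$ reduces both to a single second-moment bound.

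The one place where the sketch as written does not quite close is the extreme regime. Writing $\bar\Phi=1-\Phi$, the union bound $\pr\bigl(|Z_{(\lceil np\rceil)}-\Phi^{-1}(p)|>t\bigr)\le 2n\,\bar\Phi(\Phi^{-1}(p)+t)$, integrated against $t\,dt$ over a \emph{fixed} interval $[T,\infty)$, contributes a term of order $n\,e^{-T^2/2}$, which is not bounded uniformly in $n$; meanwhile the DKW bound does not decay like $e^{-cnt^2}$ for all $t$ but saturates near $e^{-2n(1-p)^2}$ once $t$ is large. So you must either let the crossover threshold grow like $\sqrt{\log n}$ and verify the two regimes meet, or --- more cleanly --- replace the union bound by the exact order-statistic tail $\pr(Z_{(k)}>x)\le\binom{n}{\,n-k+1\,}\bar\Phi(x)^{\,n-k+1}$, whose exponent $n-k+1\approx n(1-p)$ grows linearly in $n$ and makes the extreme contribution vanish geometrically (symmetrically for the lower tail with exponent $k\approx np$). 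You explicitly flag this as the delicate bookkeeping step, and with that repair the argument is complete; the Bahadur-representation alternative you mention would work too but, as you note, still needs the same integrable envelope.
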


\begin{theorem}\label{thm:limqual} 
For a normal random vector $\vX$ with standard uniform correlation structure 
and the complete ranking qualitative input $\sR$ in (\ref{eqn:ranking}), we have, for $0<p<1$, 
\begin{equation}\label{eqn:thmex}
\lim_{n\to\infty}\ex[X_{\lceil pn\rceil}\ |\ \vX\in \sR]=(1-\rho)\Phi^{-1}(p),
\end{equation}
\begin{equation}\label{eqn:thmvar}
\lim_{n\to\infty}\var[X_{\lceil pn\rceil}\ |\ \vX\in \sR]=\rho,
\end{equation}
\end{theorem}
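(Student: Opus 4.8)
The plan is to exploit the one-factor representation (\ref{eqn:onefactor}) with $\mu_i=0$ and $\sigma_i=1$, so that $X_i=\sqrt{\rho}\,M+\sqrt{1-\rho}\,Z_i$ with $M,Z_1,\ldots,Z_n$ iid $N(0,1)$. The key observation is that conditioning on $X\in R$, i.e.\ on the event that the $X_i$ are in increasing order, is \emph{equivalent} to conditioning on the event that the $Z_i$ are in increasing order, because $x\mapsto \sqrt{\rho}\,m+\sqrt{1-\rho}\,x$ is strictly increasing for any fixed $m$ (when $\rho<1$; the case $\rho=1$ is degenerate and can be handled separately, since then $X\in R$ has probability zero unless interpreted as a limit, or one simply notes all coordinates are equal). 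Hence, writing $Z_{(1)}\le\cdots\le Z_{(n)}$ for the order statistics of the $Z_i$, on the event $\{X\in R\}$ we have $X_{\lceil pn\rceil}=\sqrt{\rho}\,M+\sqrt{1-\rho}\,Z_{(\lceil pn\rceil)}$, and moreover $M$ is independent of the ordering event (the event $\{Z_1\le\cdots\le Z_n\}$ depends only on the $Z_i$, which are independent of $M$). Therefore
\begin{equation}\label{eqn:condrep}
\ex[X_{\lceil pn\rceil}\mid X\in R]=\sqrt{\rho}\,\ex[M]+\sqrt{1-\rho}\,\ex[Z_{(\lceil pn\rceil)}\mid Z\in R],
\end{equation}
and by exchangeability of the $Z_i$, the conditional law of the order statistics given $\{Z_1\le\cdots\le Z_n\}$ coincides with the unconditional law of the order statistics of iid $N(0,1)$ variables. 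Thus $\ex[Z_{(\lceil pn\rceil)}\mid Z\in R]=\ex[Z_{(\lceil pn\rceil)}]$, and Lemma~\ref{lem:orderlimit} applies directly.

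For the mean, combining (\ref{eqn:condrep}) with $\ex[M]=0$ and (\ref{eqn:exconv}) gives $\lim_{n\to\infty}\ex[X_{\lceil pn\rceil}\mid X\in R]=\sqrt{1-\rho}\cdot\Phi^{-1}(p)$. Here I note a slight discrepancy with the stated limit $(1-\rho)\Phi^{-1}(p)$: the argument as I have set it up yields $\sqrt{1-\rho}$, so either the theorem intends the coefficient $\sqrt{1-\rho}$, or there is a different normalization; I would double-check the statement, but the method is unaffected. For the variance, condition on $M$: using (\ref{eqn:condrep})-style reasoning and the independence of $M$ from the ordering event,
\begin{equation}\label{eqn:varsplit}
\var[X_{\lceil pn\rceil}\mid X\in R]=\rho\,\var[M]+(1-\rho)\,\var[Z_{(\lceil pn\rceil)}\mid Z\in R]=\rho+(1-\rho)\,\var[Z_{(\lceil pn\rceil)}],
\end{equation}
where the cross term vanishes because $M$ and $Z_{(\lceil pn\rceil)}$ are independent given the ordering event. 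By (\ref{eqn:varconv}), $\var[Z_{(\lceil pn\rceil)}]\to 0$, so the right-hand side tends to $\rho$, which is (\ref{eqn:thmvar}).

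The main obstacle is making the reduction in (\ref{eqn:condrep}) and (\ref{eqn:varsplit}) fully rigorous: one must justify that conditioning $X$ on the polytope $R$ is exactly conditioning $Z$ on the same-shaped polytope (a change-of-variables / measurability argument, clean since the map is a fixed affine shift in the $M$-direction composed with a scaling), and that $M$ remains independent of the event $\{Z\in R\}$ after conditioning — this is immediate since that event is $\sigma(Z_1,\ldots,Z_n)$-measurable and $M\perp(Z_1,\ldots,Z_n)$. A secondary point is the passage from almost-sure and $L^1$/$L^2$ convergence of $Z_{(\lceil pn\rceil)}$ (supplied by Lemma~\ref{lem:orderlimit}) to convergence of the conditional moments of $X_{\lceil pn\rceil}$; since the conditional moments are just affine/quadratic functions of $\ex[Z_{(\lceil pn\rceil)}]$ and $\var[Z_{(\lceil pn\rceil)}]$ by (\ref{eqn:condrep}) and (\ref{eqn:varsplit}), no further uniform integrability is needed. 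The degenerate boundary case $\rho=1$ should be dispatched with a remark.
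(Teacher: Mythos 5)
Your proof is correct and follows essentially the same route as the paper's: the one-factor representation (\ref{eqn:onefactor}), the equivalence of $\{X\in R\}$ with $\{Z_1\le\cdots\le Z_n\}$, the independence of $M$ from that ordering event, and an appeal to Lemma~\ref{lem:orderlimit}. Your remark about the coefficient is also well taken: the computation yields $\sqrt{1-\rho}\,\Phi^{-1}(p)$, and the paper's own proof likewise arrives at $\sqrt{1-\rho}\,\ex[Z_{(\lceil pn\rceil)}]$ before asserting (\ref{eqn:thmex}), so the factor $(1-\rho)$ in the theorem statement appears to be a typo for $\sqrt{1-\rho}$.
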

\begin{proof}
Since $\vX$ possesses the standard uniform correlation structure, it can be represented by the one-factor model 
(\ref{eqn:onefactor}) with $\mu_i = 0$ and $\sigma_i = 1$ for all $i=1,\ldots,n$.  
The event $\vX\in \sR$ is equivalent to $Z_1 \leq Z_2 \leq \cdots Z_n$.  
Therefore, with $Z_{(i)}$ denoting the $i^\text{th}$ order 
statistic of $Z_i, i=1,\ldots,n$,
\begin{eqnarray*}
\lefteqn{\ex[X_{\lceil pn\rceil}\ |\ \vX\in \sR]}\\
 & = & \ex[\sqrt{\rho} M\mid \vX\in \sR] + \ex[\sqrt{1-\rho}Z_{\lceil pn\rceil}\mid \vX\in \sR] \\
 & = & \sqrt{\rho} \ex[ M\mid Z_1 \leq \cdots \leq Z_n] + \sqrt{1-\rho}\ex[Z_{\lceil pn\rceil}\mid Z_1\leq \cdots\leq Z_n] \\
 & = & \sqrt{\rho} \ex[M] + \sqrt{1-\rho}\ex[Z_{(\lceil pn\rceil)}] \\
 & = & \sqrt{1-\rho}\ex[Z_{(\lceil pn\rceil)}].
\end{eqnarray*}
(\ref{eqn:exconv}) imples (\ref{eqn:thmex}).
By a similar argument, with $M$ and $Z_i$'s being independent,
\begin{eqnarray*}
\lefteqn{\ex[X_{\lceil pn\rceil}^2\ |\ \vX\in \sR]}\\
 & = & \ex[\rho M^2 + 2\rho(1-\rho)MZ_{\lceil pn\rceil} + (1-\rho)Z_{\lceil pn\rceil}^2\mid \vX\in \sR] \\
 & = & \rho\ex[ M^2\mid \vX\in \sR] + 2\rho(1-\rho)\ex[MZ_{\lceil pn\rceil}\mid \vX\in \sR] \\ 
 & & \quad+(1-\rho)\ex[Z_{\lceil pn\rceil}^2\mid \vX\in \sR] \\
 & = & \rho\ex[ M^2\mid Z_1\leq\cdots\leq Z_n] + 2\rho(1-\rho)\ex[MZ_{\lceil pn\rceil}\mid Z_1\leq\cdots\leq Z_n] \\
 & & \quad+(1-\rho)\ex[Z_{\lceil pn\rceil}^2\mid Z_1\leq\cdots\leq Z_n] \\
 & = & \rho\ex[M^2] +2\rho(1-\rho)\ex[M]\ex[Z_{\lceil pn\rceil}\mid Z_1\leq\cdots\leq Z_n] +(1-\rho)\ex[Z_{(\lceil pn\rceil)}^2] \\
 & & \text{since $M$ and $Z_i$'s are independent,} \\
 & &  \rho\ex[M^2] +2\rho(1-\rho)\ex[M]\ex[Z_{(\lceil pn\rceil)}] +(1-\rho)\ex[Z_{(\lceil pn\rceil)}^2] \\
 & = & \rho\ex[M^2] +(1-\rho)\ex[Z_{(\lceil pn\rceil)}^2] \\
 & = & \rho + (1-\rho)\ex[Z_{(\lceil pn\rceil)}^2]
\end{eqnarray*}
Therefore,
\[
\var[X_{\lceil pn\rceil}\ |\ \vX\in \sR] = \rho + (1-\rho)\var[Z_{(\lceil pn\rceil)}].\\
\]
 By (\ref{eqn:varconv}), the last term on the right hand side of the last equation goes to zero as $n$ goes to infinity.  
 (\ref{eqn:thmvar}) then follows.
\end{proof}

In Theorem~\ref{thm:limqual}, Equation (\ref{eqn:thmex}) 
provides the first insight into the role of the dependence among variables to this model.  
When $\rho$ approaches one, Equation (\ref{eqn:thmex}) states that all expectations shrink towards zero.  
It suggests that the non-negative correlation coefficeint creates a clustering effect on 
the order statistics of the variables of interest.  

Another role of the dependence among variables, which is crucial for our purpose, 
is given in Equation (\ref{eqn:thmvar}).
In this model, we estimate the expectation of each random variable $X_i$ based on a ranking information 
by its conditional expectation given the ranking. Therefore, the conditional variance given the ranking measures the estimation accuracy, 
and (\ref{eqn:thmvar}) specifies the limiting accuracy of the estimate. 
From (\ref{eqn:thmvar}), when $\rho=0$, the conditional variance given the ranking is zero in the limit, 
implying that, in the limit, the conditional expectation given the ranking is a perfect estimate with no error.  
Also from (\ref{eqn:thmvar}), the limiting accuracy deteriorates when $\rho$ increases, suggesting that the 
dependence among the variables is the major source of the estimation error to this model when the number of variables is large.  
In conclustion, under the standard uniform correlation structure, at small $\rho$ and with $n$ large, 
the estimate from the conditional expectation given the ranking achieves high accuracy, even though the information that forms the estimate is only of qualitative nature.  

Theorem~\ref{thm:limqual} only specifies the limiting behavior of the estimate of the the inference 
model with ranking input.  To study the finite dimensional behavior of this model, 
a computation method for the conditional expectation given a ranking (\ref{eqn:estimate})
at finite but large $n$ is required. 
This computation would also enable this model to be deployed in real world applications.  
In the next section, we describe a computational method to perform this task.

\section{A Computational method}\label{sec:method}

In this section, a numerical integration method for evaluating the conditional expectation 
(\ref{eqn:estimate}) is introduced.  
With the computation method, we study the finite dimensional behavior of the inference model 
with the standard and non-standard uniform correlation. 

The conditional expectation (\ref{eqn:estimate}) requires an $n$-dimensional integration operation.  
\cite{Kiatsupaibul2017} provides a recursive integration method that reduces this $n$-dimensional integration to 
a series of two-dimensional integration operation.  
The following is the recursive integration formula by \cite{Kiatsupaibul2017} that is adapted to the one-factor model (\ref{eqn:onefactor}).

\begin{equation}\label{eqn:condexform1}
\ex[X_k\mid \vX\in \sR] = \frac{A}{B}
\end{equation}
where $A$ and $B$ are evaluated by the following two-dimensional recursive integration formulae.
Let $\phi(x)$ denote the probability density function (pdf) of the standard normal distribution $N(0,1)$.
For each $m\in\Re$, let $\phi_i(m,x)$ denote pdf of the normal distribution 
$N\left(\mu_i + \sigma_i\sqrt{\rho}m, \sigma_i^2(1-\rho)\right)$.  
\begin{equation}
B = \int_{-\infty}^\infty  \int_{-\infty}^\infty \phi(m) b_{n-1}(m,x)\phi_n(m,x)\,dx\,dm,
\end{equation}
where $b_0(m,x)=1$ and, for $i=1,\ldots,n-1$, define for each $m,x\in\Re$,
\begin{equation}
b_i(m,x) = \int_{-\infty}^x b_{i-1}(m,t)\phi_i(m,t)\,dt.
\end{equation}
The recursive integration formula for $A$ is as follows.  For $i=1,\ldots,n$ and $i\neq k$, define $g_i(x) = 1$, and when $i=k$,
\[g_k(x) = x,\]
where $k$ is the index of the variable in (\ref{eqn:condexform1}) whose expectation to be evaluated.
\begin{equation}
A = \int_{-\infty}^\infty \int_{-\infty}^\infty \phi(m)h_{n-1}(m,x)g_n(x)\phi_n(m,x)\,dx\,dm,
\end{equation}
where $h_0(m,x) = 1$ and, for $i=1,\ldots,n-1$, define for each $m,x\in\Re$,
\begin{equation}
h_i(m, x) = \int_{-\infty}^x h_{i-1}(m,t)g_i(t)\phi_i(m,t)\,dt.
\end{equation}
The conditional second moment given the ranking can be with the same formula by replacing $g_i(x) = x^2$ when $i=k$.  
The variance and standard deviation given the ranking can be deduced from the conditional expectation and 
the conditional second moment.
(Refer to \cite{Kiatsupaibul2017} for the implementation and some properties of the above recursive integration formulae.)

With the inference method described above, we can study the finite dimensional behavior of the rank constrained 
inference model with the standard uniform correlation structure.  
Figure~\ref{fig:convspeed} show the convergence speeds of the standard deviations of the estimates to the limit implied by Theorem~\ref{thm:limqual}.

\begin{figure}[h!]
\begin{center}
\includegraphics[scale=0.5]{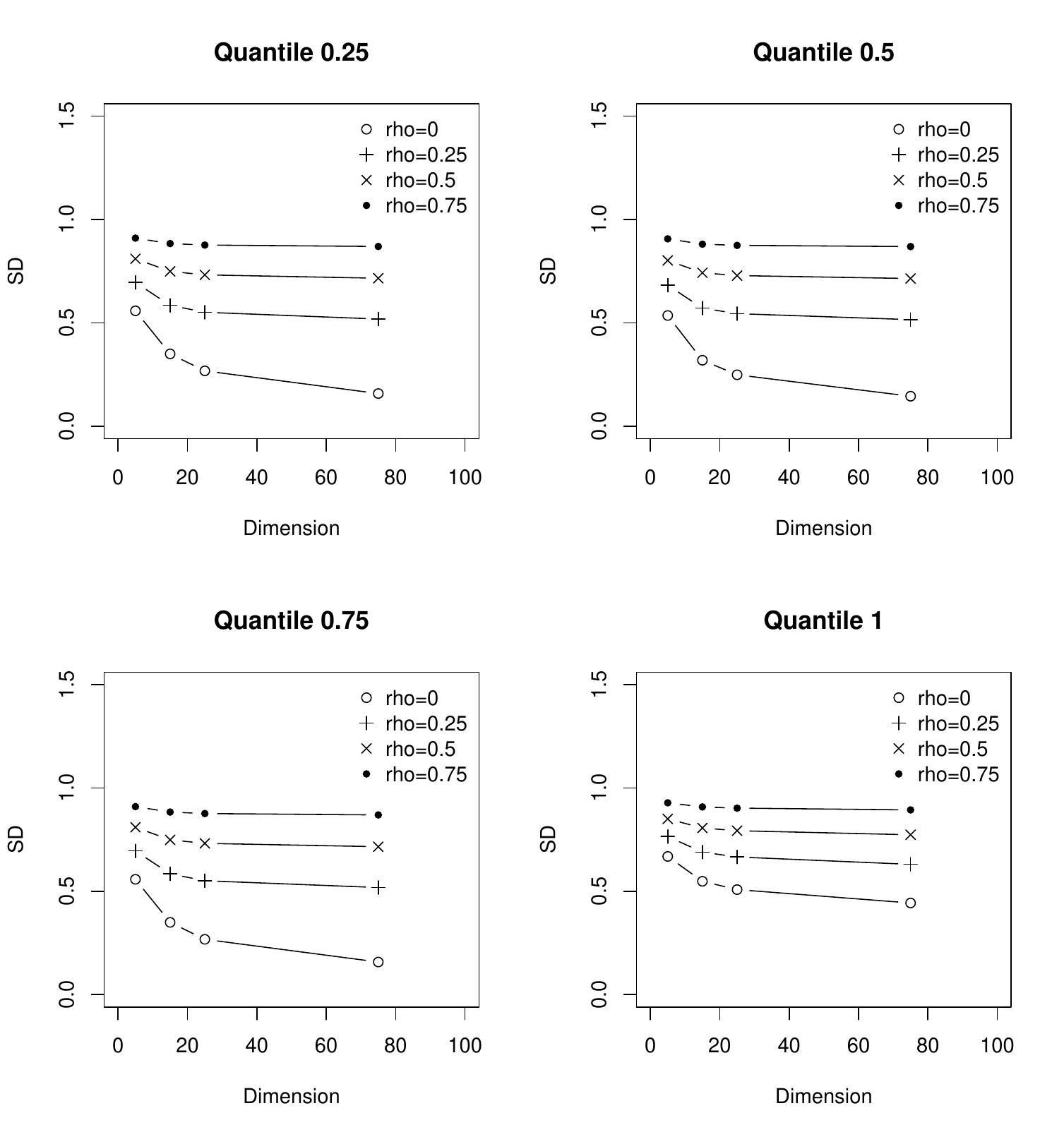}
\end{center}
\caption{Convergence speeds of the SDs at different quantiles and $\rho$\label{fig:convspeed}}
\end{figure}

Consider the case when $\vX$ possesses a standard uniform correlation structure
with $\mu_i = 0$ and $\sigma_i = 1$ for all $i=1,\ldots,n$.  
We compute the conditional standard deviation of $\vX$ given 
the complete ranking information at dimensions $n = 5, 15, 25, 75$
where the correlation coefficient $\rho$ are controlled at $\rho=0, 0.25, 0.50, 0.75$.
The computation are performed by recursive integration technique described above.
The conditional standard deviations of the 0.25, 0.50, 0.75 and 1.00 quantiles of the estimates are shown in 
Figure~\ref{fig:convspeed}. 
From Figure~\ref{fig:convspeed}, at each $\rho$ and quantile, 
we observe that the conditional standard deviation (SD) decreases as the dimension $n$ grows.  
The speed of reduction in the conditional SD is tapered off at high dimension $n$.
Since the conditional SD measures the estimation accuracy, this result implies that the estimation accuracy 
increases at higher dimensions, but converges to the limit imposed by Theorem~\ref{thm:limqual}. 
We also observed that the conditional SD is smaller with smaller correlation coefficient $\rho$. 
This result implies that the estimation accuracy is compromised by the dependence among the variables. 

From Figure~\ref{fig:convspeed}, at a fixed $\rho$, there is no obvious difference in the graphs among 
0.25, 0.50 and 0.75 quantiles.  However, the graph for 1.00 quantile is quite different from the others.  
It should be noted that the limiting conditional SD for 0.25, 0.50 and 0.75 quantiles are all governed by 
Theorem~\ref{thm:limqual}, which states that they all converge to a constant.  However, the limiting 
conditional SD for 1.00 quantile is beyond the scope of Theorem~\ref{thm:limqual}.  When $\rho=0$, 
the limit of the conditional SD for 1.00 quantile is governed by the extreme value theorem.  Therefore, 
it is possible that the decreasing pattern for the conditional SD in the case of 1.00 quantile is different from 
those in the other cases.  For the case of 1.00 quantile with $\rho \neq 0$, there is no limit theorem 
to explain the limiting behavior of the conditional SD.  Nevertheless, in Figure~\ref{fig:convspeed}, we 
can still observe the decreasing pattern of the conditional SD for 1.00 quantile with $\rho\neq 0$ in 
finite dimensions.

Now consider a case when $\vX$ possesses a non-standard uniform correlation structure, i.e., 
$X_i$ are not identically distributed.  
Let $\sigma_i=1$, for $i=1,\ldots,n$, but 
$\mu_i$'s be different from one another.  
We call a qualitative input $\sR$ reinforcing if $\vmu\in \sR$.  
On the other hand, if $\vmu\notin \sR$, we call $\sR$ opposing.  
The degree of reinforcement depends on how deep $\vmu$ is in $\sR$.  
We would like to observe the effect of the degree of reinforcement 
on the accuracy of the estimate (\ref{eqn:estimate}).  

To do so, we set $\vmu$ as follows.  Let vector $\vnu$ be a 
vector whose component $i$ is
\[\nu_i = -1 + \frac{2(i-1)}{n-1}, i=1,2,\ldots,n\]
Then $\vnu\in \sR$ whose components are equi-spaced.  
Now let $\vmu$ be
\begin{equation}
\vmu = r\left(\frac{\vnu}{\|\vnu\|_2}\right),
\end{equation}
where $\|\vnu\|_2$ is the Euclidean distance of $\vnu$.  
In other words, $\vmu$ is the equi-spaced vector that is scaled to have length $r$, 
emanating from the origin, which is the tip point of the cone $\sR$.  
The length $r$ can be regarded as the degree of reinforcement.  
It should be noted that $r$ can be negative.  
A negative length $r$ expresses the degree of opposition of $\vmu$ to the input $\sR$.
We call $r$ the reinforcement index.  Figure~\ref{fig:reinforcingrho0} shows the conditional SD 
of the estimates versus $r$ when $\rho$ is 0.

\begin{figure}[h!]
\begin{center}
\includegraphics[scale=0.5]{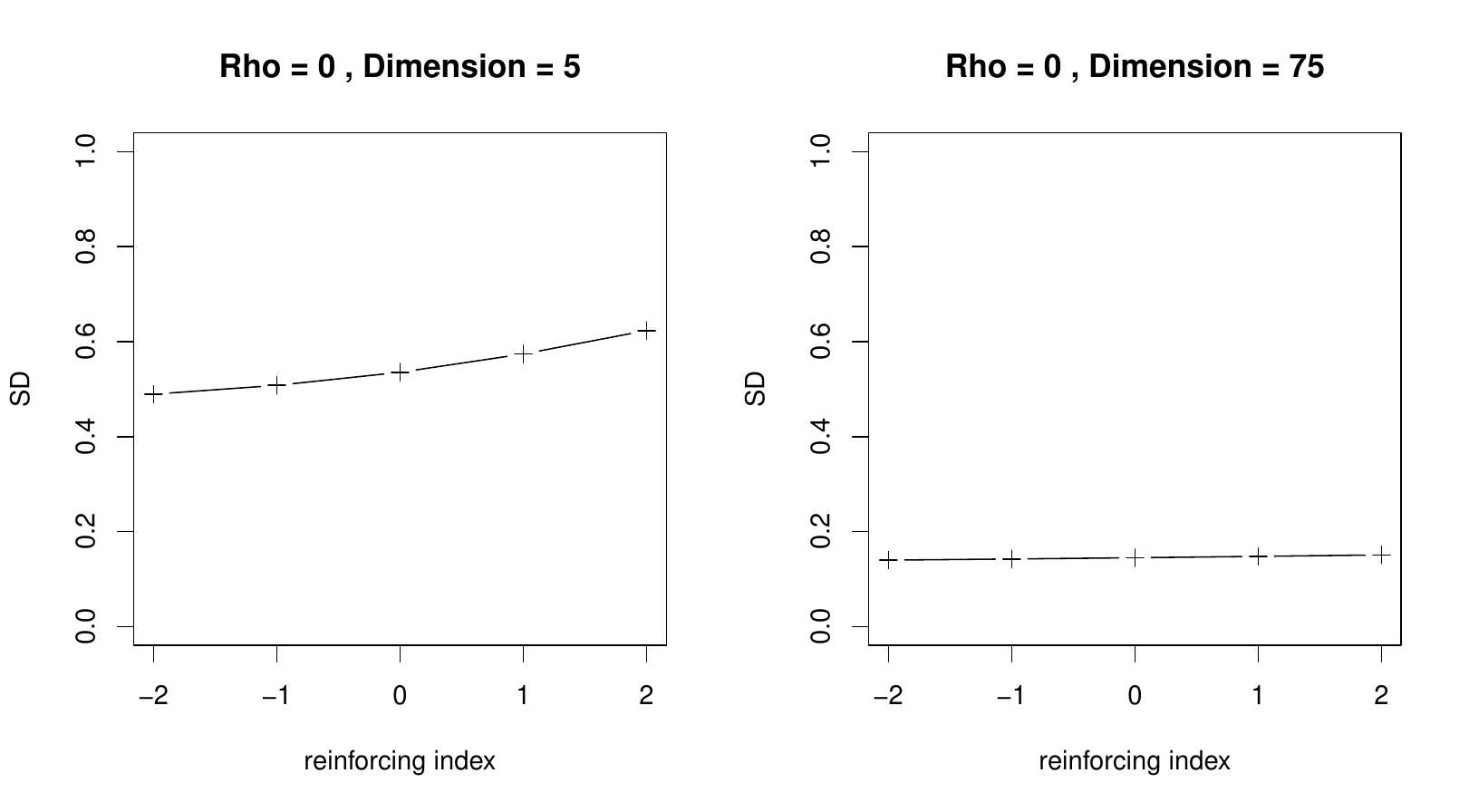}
\end{center}
\caption{The accuracy of the estimates measured by the standard deviation of 
the 50th sample percentile at $\rho=0$ when $n=5$ and $75$\label{fig:reinforcingrho0}}
\end{figure}

From the left panel of Figure~\ref{fig:reinforcingrho0}, in low dimensions ($n=5$), 
the graph of the conditional SD is tilted upward from a negative $r$ to a positive $r$. 
This increasing pattern implies that, in low dimensions, 
we obtain a higher accuracy of the estimates when we have opposing inputs.  
However, in higher dimensions ($n=75$), as shown in the right panel of Figure~\ref{fig:reinforcingrho0}, the 
graph is relatively flat.  This pattern implies that, in high dimensions,
there are no differences in the accuracy between the reinforcing inputs and the opposing inputs.  
Figure~\ref{fig:reinforcingrho0_5} shows the conditional SD of the estimates versus $r$ when $\rho$ is 0.5.  
From the left panel of Figure~\ref{fig:reinforcingrho0_5}, when $\rho=0.5$, 
we still observe the increasing pattern of the graph in low dimensions ($n=5$) even though it is not as pronounced 
as when $\rho=0$.  From the right panel of Figure~\ref{fig:reinforcingrho0_5}, when $\rho=0.5$
the graph of the conditional SD is relatively flat in higher dimensions ($n=75$), similar to the case
when $\rho=0$.

\begin{figure}[h!]
\begin{center}
\includegraphics[scale=0.5]{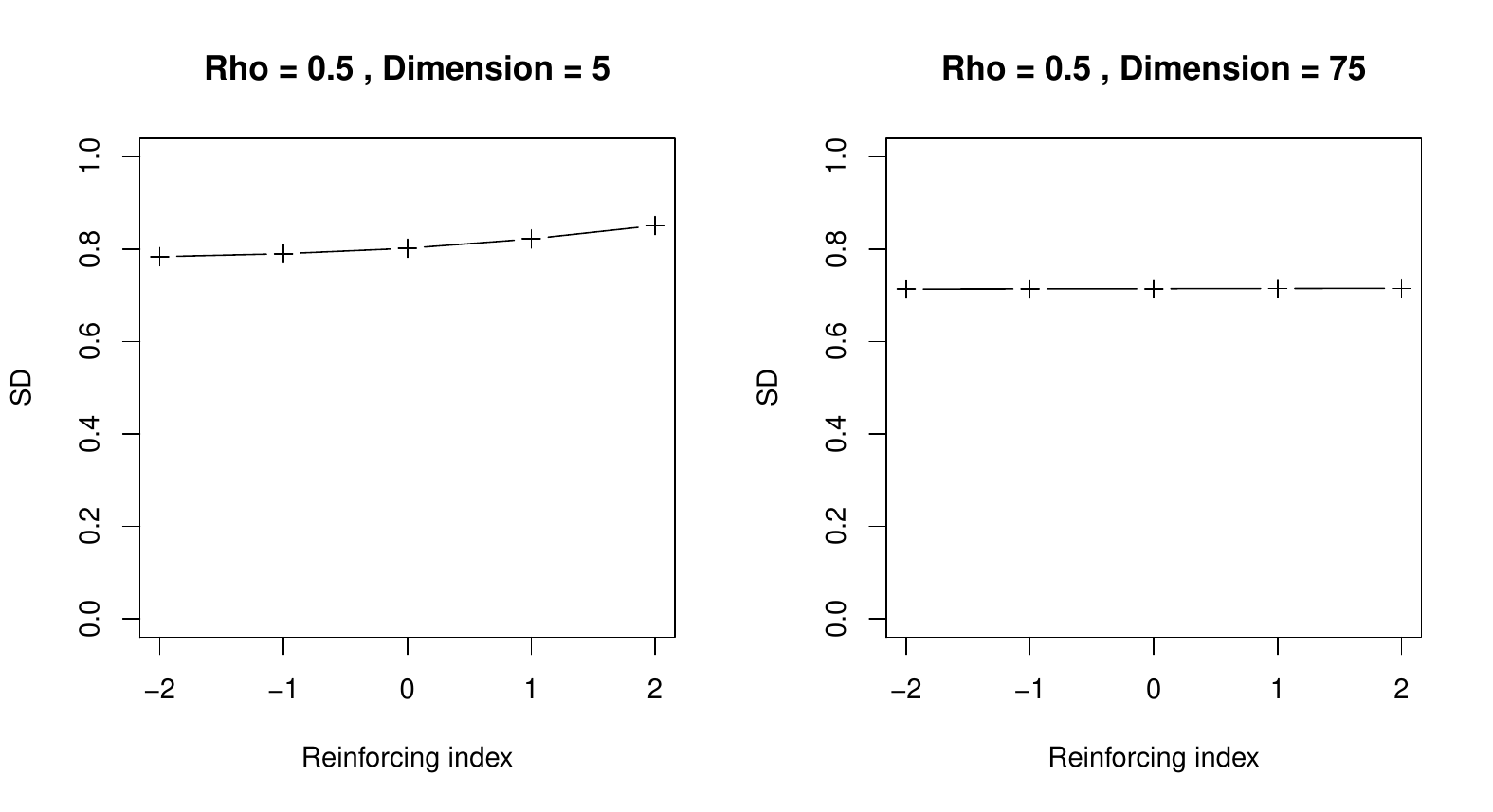}
\end{center}
\caption{The accuracy of the estimates measured by the standard deviation of 
the 50th sample percentile at $\rho=0.5$ when $n=5$ and $75$\label{fig:reinforcingrho0_5}}
\end{figure}

\section{An application to portfolio selection}

An application of the statistical estimation with ranking input can be found in a mean-variance portfolio selection problem.  
In a mean-variance portfolio context, the random variables of interest are, from a Bayesian perspective, 
the future expected returns on the assets. 
That is, in a universe of $n$ assets, $X_i$ is the future expected return on asset $i$ for $i = 1,\ldots, n$.   
\cite{Best1991} showed that the solution to the portfolio selection problem is very sensitive to the estimate 
of $X_i$.  From Theorem~\ref{thm:limqual}, we learned that ranking information can enhance the 
accuracy of the estimates of $X_i$ at high dimensions.  Consequently, the performance of a portfolio 
would also be enhanced when the accuracy of the estimates of $X_i$ is improved by the ranking input.   
In Theorem~\ref{thm:limqual}, we see that the degree of the accuracy of the estimates with ranking input 
is controlled by the correlation coefficient $\rho$. In this section, we investigate the effects of $\rho$ on the performance of the portfolio formed by the estimates 
of $X_i$ from the methodology put forth.

Recall a mean-variance portfolio selection problem: 
\begin{equation}\label{eqn:markowitz}
\max_{\vone^\top \vw=1}\ 
{\hat{\vmu}}^\top \vw-\frac{\gamma}{2}\vw^\top{\hat{\vXi}}\vw.
\end{equation}
To determine an optimal portfolio weight vector $\vw$, this problem requires, as inputs, $\hat{\vmu}$, the estimate of future expected asset return vector $\vX$, and $\hat{\vXi}$, the estimate of the covariance matrix of future asset returns. 
In order to single out the effect of the estimation accuracy on the optimal portfolio performance, we assume that 
$\hat{\vXi} = \vXi$ is known, and we estimate only $\hat{\vmu}$.  
In this section, we evaluate the performances of portfolios constructed based on the return estimates with ranking input 
at different levels of $\rho$. We then compare the performances of portfolios with rank constrained return estimates 
against some benchmarks.  
We separate the study into three cases based on the covariance matrix structure, 
namely the standard uniform correlation structure, the uniform correlation structure and the general one factor model.  
In the standard uniform correlation structure case, the performance of portfolios based on rank constrained return estimates 
can be assessed through Theorem~\ref{thm:limqual}.  In the non-standard uniform correlation structure case, 
the portfolio performance is assessed  through a simulation experiment.  In the general one factor model, the portfolio 
performance is investigated based on a real financial data set.

\subsection{Standard uniform correlation structure}

We first consider the scenario where the future expected asset returns possess a standard uniform correlation structure.  
In Theorem~\ref{thm:limqual}, we have seen that the correlation coefficient $\rho$ reduces the accuracy of 
a rank constrained estimate of the future expected asset returns. 
However, one can show that the performance of the portfolio is not affected by $\rho$.  
Consider the following example.

\begin{example}\label{ex:shift}
Consider two portfolio selection problems with two different return estimates $\hmu$ and $\cmu$.  Let $\hmu$ 
and $\cmu$ be different by a constant, i.e., 
\[\hmu = \sqrt{\rho}\hat{\vm{m}} + \sqrt{1-\rho}\,\vm{z} \text{ and } \cmu = \sqrt{\rho}\,\check{\vm{m}} + \sqrt{1-\rho}\,\vm{z}, \]
where $\hat{\vm{m}}$ and $\check{\vm{m}}$ are constant vectors of values $c_1$ and $c_2$, respectively.  
Let us assume that the other parameters, which are the covariance matrix and the risk aversion parameter, are the same for the 
two portfolio problems.  One can easily see that the optimal solutions to the two portfolio problems are the same.  
That means the optimal solution will not be affected by a parallel shift of the return input.  
\end{example}

In the discussion prior to Theorem~\ref{lem:orderlimit}, 
a future expected return vector with a standard uniform correlation structure can be written in a 
one-factor model defined in (\ref{eqn:onefactor}).
The estimation error of the return vector can be decomposed into the estimation error of the common factor $\vM$ 
and that of the idiosyncratic term $\vZ$.
From the proof of Theorem~\ref{thm:limqual}, the estimation error of 
the idiosyncratic term $\vZ$ is eliminated by the knowledge of a perfect ranking at high dimensions, 
while the estimation error of the common factor
$\vM$ remains intact.  However, the estimation error from the common factors $\vM$ is only a parallel shift in the return estimation. As shown in Example~\ref{ex:shift}, the parallel shift does not influence the optimal solution. 
Therefore the increase in the correlation efficient, even though reduce the estimation accuracy, 
does not compromise the portfolio performance. 

\subsection{Uniform correlation structure}\label{subsec:uniformcorr}

The characteristics of a rank constrained return estimates and, hence, the performance of a portfolio, 
with respect to a non-standard uniform correlation structure, is beyond the scope of Theorem~\ref{thm:limqual}.
To assess the effect of correlation coefficient on the portfolio performance, we resort to a simulation experiment 
equipped with numerical computation method described in Section~\ref{sec:method}. We extend the simulation study of \cite{Chiarawongse2012} by controlling $\rho$, the correlation coefficient defined in (\ref{eqn:onefactor}). 
We carry out the mean-variance portfolio selection in (\ref{eqn:markowitz}), assuming that 
$\hat{\vXi} = \vXi$ is known and estimating only $\hat{\vmu}$. 
The objective of the simulation is to compare the optimal portfolio performance with 
respect to three types of $\hat{\vmu}$ estimates: the prior mean, the true mean and 
the conditional mean with ranking input.

In our simulation study, we assume that the future expected asset return vector $\vX$ has a uniform correlation structure.
To simulate the vector $\vX$, we require the following hyperparameters: prior mean vector $\vmu$, 
the standard deviation parameter for generating the prior means $\sigma_{\mu}= 2.5\times 10^{-7}$, the variance parameter for generating the 
covariance matrix ${\sigma}^2_\Sigma = 1\times 10^{-3}$, the scaling parameter for covariance matrix $\tau = 0.1$, and 
the correlation coefficient $\rho$. 
The sequence of steps in our simulation is as follows.
\begin{enumerate}[leftmargin=2cm, label=\emph{Step }{\arabic*}]
\item Simulate an $n$-vector $\vmu$ of prior means whose components $\mu_i, i=1,\ldots,n$ 
are iid and each one has $N(0,{\sigma}^2_\mu)$ distribution.
\item Simulate covariance matrix of asset returns $\vXi$ as follows.  
First, we simulate an $n$-vector $\vm{s}$ of scaling factors whose component $s_i, i=1,\ldots,n$ are iid and each one has $\chi^2_n$ distribution.
We then let $\vXi = \vm{S}\vm{D}\vm{S}$ where $\vm{S}$ is an $n\times n$ diagonal matrix whose $i^{\text{th}}$ diagnonal entry is 
$\sqrt{s_i{\sigma}^2_\Sigma}$ and $\vm{D}$ is an $n\times n$ correlation matrix whose off-diagonal entries all equal 
the correlation coefficient $\rho$.
\item Simulate $\vX\sim N(\vmu,\vSigma)$ where $\vSigma=\tau\vXi$.
\item Extract the ranking information $\sR$ from $\vX$.
\item Form the three portfolios based on the three estimations of $\hat{\vmu}$ and measure their performances. 
\end{enumerate}
The experiments are done in this setting where the correlation coefficient $\rho$ 
and the number of assets $n$ are controlled at 
\begin{itemize}
\item $\rho= 0, 0.25, 0.50, 0.75$,
\item $n = 5,15,25,75 $.
\end{itemize}

In \emph{Step} 5, we execute three mean-variance portfolio selection models based on the three types of future expected return estimate 
substituted into $\hat{\vmu}$ in (\ref{eqn:markowitz}):
\begin{itemize}
\item \emph{Prior}: The prior mean $\vmu$.
\item \emph{Clairvoyance}: The true $\vx$ that is a sample of the future expected asset return vector $\vX$ 
generated by the simulation.
\item \emph{Rank constrained}: The conditional mean with ranking information 
\begin{equation}\label{eqn:mutilde}
\tilde{\vmu}=\ex[\vX\mid \vX\in \sR].
\end{equation}
\end{itemize}

For the rank constrained model (\ref{eqn:mutilde}), we use $\sR$ as the perfect ranking of the 
future expected asset returns extracted from the simulated $\vX$.  
The conditional expectation (\ref{eqn:mutilde}) with ranking information $\tilde{\vmu}$ is computed 
from the prior mean $\vmu$, the covariance matrix $\vSigma$ and the ranking information $\sR$ 
by the recursive integration technique described in Section~\ref{sec:method}.

The optimal-weight vector for the three portfolio selection models are computed according to (\ref{eqn:markowitz}) 
by adopting the true $\vXi$ as the covariance matrix.  Following \cite{Chiarawongse2012}, 
the risk aversion parameter $\gamma$ is set to 4.
We denote the optimal weight vectors corresponding to the prior model, the clairvoyance model and the rank constrained 
model by $\bar{\vw}$, $\vw^*$ and $\tilde{\vw}$, respectively.

We solve 100 instances of the portfolio selection problem (\ref{eqn:markowitz}) with simulated parameters obtained by the simulation environment 
described above and with different estimates of $\hat{\vmu}$.  
In each instance, we evaluate the performance of each portfolio selection model based on 
the Certainty Equivalence Return (CEQ) defined as 
\begin{equation}\label{eqn:CEQ}
\ \text{CEQ} = {\vx}^{T}\vw-\frac{\gamma}{2}{\vw}^{T}{\vXi}\vw,
\end{equation}
where $\vx$ is the sample of $\vX$ and the weights vector $\vw$ 
varies according to each model solution ($\bar{\vw}$, $\vw^*$ or $\tilde{\vw}$).
Finally, we average the performances of the 100 instances of the 
three portfolio models at the different values of correlation coefficient 
and compare the average performance across number of assets $n$ as shown in Figure~\ref{fig:port1}. 
To facililitate the comparison across levels of correlation coefficient, we also compute the percent differences 
between the clairvoyance and the rank constrained model as shown in Figure~\ref{fig:port2}.

\begin{figure}[h!]
\begin{center}
\includegraphics[scale=0.5]{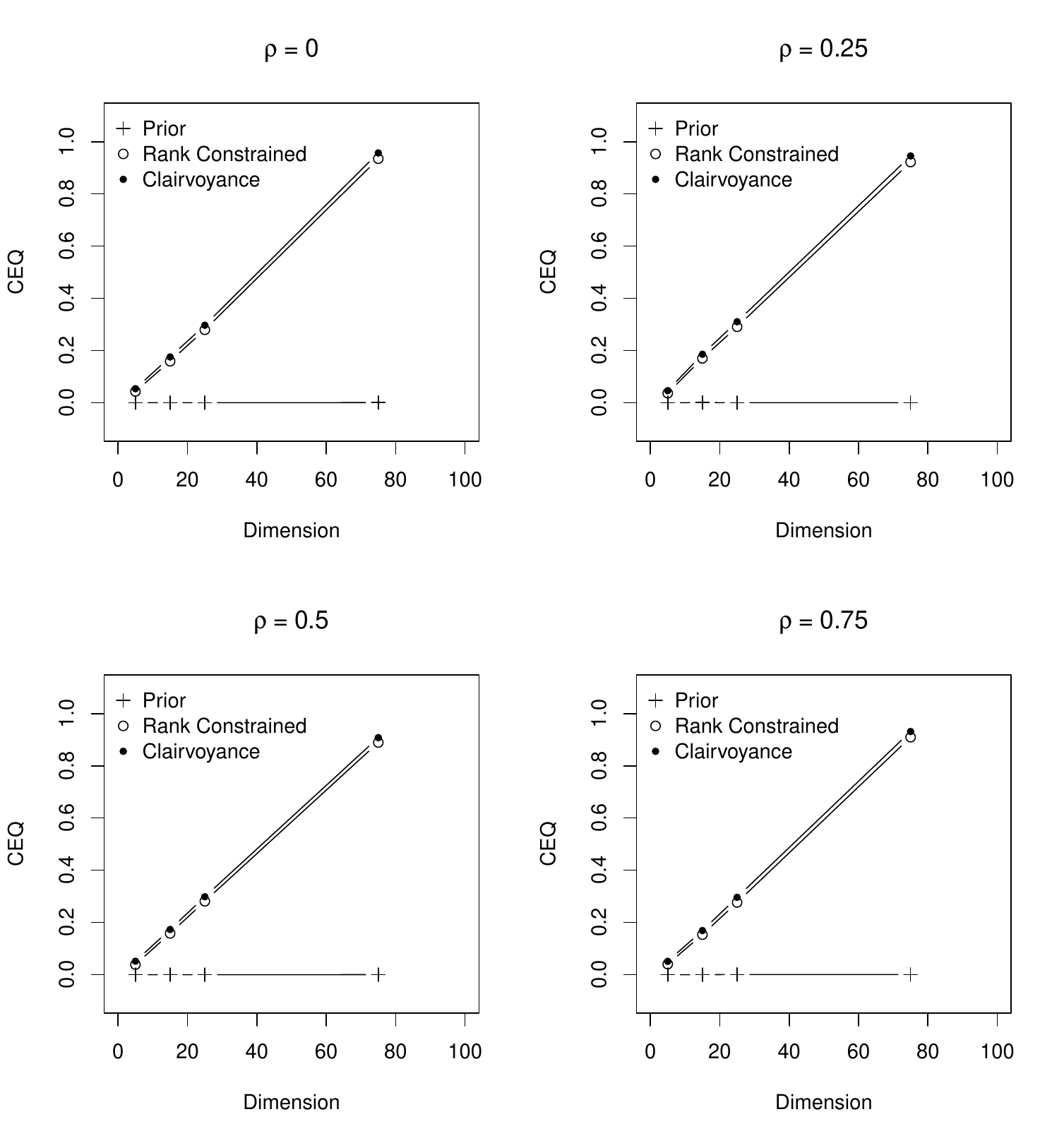}
\end{center}
\caption{Performances of different portfolio selection models based Certainty equivalence (CEQ) as functions of the 
number of assets (Dimensions).  Different panels shows the performances versus number of assets at 
different values of correlation coefficient.}\label{fig:port1}
\end{figure}

\begin{figure}[h!]
\begin{center}
\includegraphics[scale=0.5]{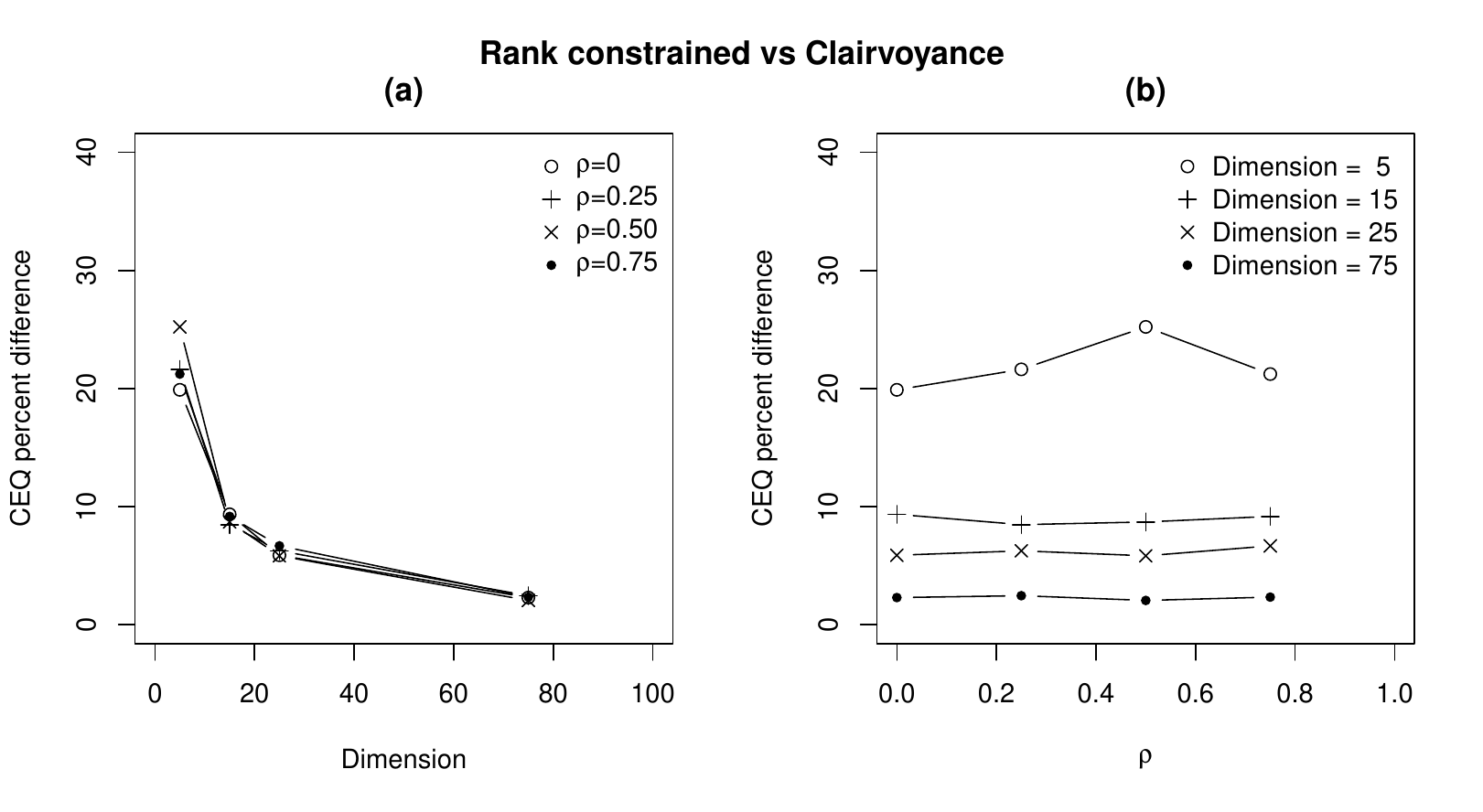}
\end{center}
\caption{Panel (a) shows the percent differences between the clairvoyance and the rank constrained model as functions of the dimensions.  Panel (b) shows the percent differences between the clairvoyance and the rank constrained model as functions of the correlations coefficients.}\label{fig:port2}
\end{figure}

Figure~\ref{fig:port1} shows the performance evaluated for each model at different numbers of assets $n = 5,15,25,75$.
For each return estimation model, for all levels of correlation coefficient $\rho$, the clairvoyance model achieves the highest performance while trailed closely by the rank contrained model. 
The prior model leads to the worst performance in every setting. 
This confirms the benefit of incorporating ranking information in the return estimation as previously found in \cite{Chiarawongse2012}.

Figure~\ref{fig:port2} shows the percent differences between the clairvoyance model and the rank constrained model 
as functions of the dimension in Panel~(a) and as functions of the correlation coefficients in Panel~(b). 
According to Figure~\ref{fig:convspeed}, the estimation discrepancy, represented by the standard deviation of the estimator, 
declines in higher dimensions.  
Panel~(a) of Figure~\ref{fig:port2} shows that the performance discrepancy between the ranked constrained model and the clairvoyance model also declines in a larger portfolio as expected. 
From Figure~\ref{fig:convspeed}, since the standard deviation of the estimator with ranking information increases 
with larger correlation coefficient $\rho$, one may expect the performance of the rank constrained model to deteriorate 
commensurately with larger $\rho$.  However, as seen in the case of the standard uniform correlation structure, 
the return estimation error from $\rho$ is largely a result of a parallel shift, which does not compromise the performance 
of the portfolio.  
Panel~(b) of Figure~\ref{fig:port2} shows no visible trend of the 
relative portfolio performance between the rank constrained and the clairvoyance models as $\rho$ grows larger.  
This result confirms that ranking information when fused into the return estimations can eliminate the influence of $\rho$ on 
portfolio performance.

\subsection{One-factor Model}\label{sec:onefactor}

In this section, 
we investigate the portfolio performance with a more general correlation structure given a perfect ranking information.  
We assume that the future expected return vector $\vX$ can be well represented by the one factor model.
\begin{equation}\label{eqn:onefactorgen}
X_i = \sigma_i(\lambda_iM +  \sqrt{1-\lambda_i^2}Z_i) + \mu_i, \text{for } i = 1,\ldots,n,
\end{equation}
where $M, Z_i\sim N(0,1)$ for $i=1,\ldots,n$ are iid and $0\leq \lambda_i \leq 1$ for $i=1,\ldots,n$ are the factor loadings.
We employ real financial time series from Kenneth French’s web site at Dartmouth. 
We approximate the correlation structure of this real 
data set by the one-factor model, which is a common practice in portfolio selection methodology.  
In addition, this approximation also lends itself to the computation method we discussed. 
The perfect ranking information is hypothetically generated from 
the ranking of the target variables one-step ahead in the financial time series.  
The objective of this section is to compare the performance of the portfolios with parameters obtained 
from the rank constrained statistical model with the one factor correlation structure and the perfect ranking 
against alternatives.

The financial time series under this study contains 
ten industry monthly asset returns during the period of 07/2005 - 12/2022.  
We follows the rolling-sample approach in \cite{DeMiguel2009}.
In each monthly time step from 01/2011 to 12/2022, 
we form a portfolio based on the rank constrained model, 
resulting in the total of 144 portfolios that are re-adjusted monthly.  
We the repeat the process to obtain 144 benchmark portfolios to compare against those from the rank constrained model.
The covariance matrix for the target month $t$ is estimated by applying the one factor analysis to the data in the period 
of 66 months prior to the target month $t$, while the perfect ranking is taken from the ranking of the asset returns 
of the target month.  
For example, for the portfolio formed for month 01/2011, the covariance matrix is estimated from months 07/2005 to 12/2010, 
while the ranking is taken from the target month 01/2011. We denote the estimated covariance matrix of month $t$ by 
$\hat{\vXi}_t$.

To estimate the rank constrained expected return $\tilde{\vmu}_t$ according to \eqref{eqn:mutilde}, 
the prior mean $\vmu_t$ and the prior covariance matrix $\vSigma_t$ of the expected return at month $t$ are required. 
Following \cite{black:1991,black:1992}, the prior mean $\vmu_t$ of period $t$ is implied from 
the market portfolio of the last time period $t-1$.  
The market portfolios are also obtained from Kenneth French’s web site at Dartmouth.
Following Section~\ref{subsec:uniformcorr}, 
the prior covariance matrix $\vSigma_t$ is taken as $\tau \hat{\vXi}_t$ where $\hat{\vXi}_t$ 
is the estimated covariance matrix of month $t$ and $\tau$ is set to be equal to 0.1.
Since $\hat{\vXi}_t$ is a result of the one-factor model, the estimated rank constrained return $\tilde{\vmu}_t$ 
can be computed by the recursive integration method in Section~\ref{sec:method} where $\rho$ is replaced by 
the loading factor from the one factor analysis.  See \cite{Kiatsupaibul2017} for details.  

In this section, we compare the out-of-sample empirical performances of the rank constrained portfolios with 
the perfect ranking to the benchmark portfolios using 
the estimated certainty equivalence return (CEQ) as the performance metric. 
Following the rolling-horizon experiments, 
we construct the rank-constrained portfolios according to the unconstrained optimization model as in \cite{DeMiguel2009},
using rank constrained expected return $\tilde{\vmu}_t$ 
and the estimated covariance matrix $\hat{\vXi}_t$. 
The benchmark portfolio of month $t$ is then computes as the market portfolio 
where we assigns a weight to each asset equal to the market capitalization of that asset 
in month $t-1$ divided by the total market capitalization of month $t-1$.
Holding the portfolio for one month gives the out-of-sample return at time $t$. 
For each type of portfolio, we then partition the time series of portfolio returns of 
$144$ months into $12$ consecutive twelve-month periods. 
In each period $l$ for $l=1,\ldots,12$, we define the estimated CEQ as 
\[\hat{\text{CEQ}}_l = \bar{r}_l - s_l^2/2,\quad l=1,\ldots,12,\]
where $\bar{r}_l$ and $s_l^2$ are the average and 
the sample variance of the monthly returns over period $l$.  

The estimated CEQs of the rank-constrained model and the benchmark are shown as 
the dotted line and the solid line in Figure~\ref{fig:emport1}, respectively.  
The CEQ's from the rank constrained model with the perfect ranking (dotted line) 
consistently outperform those from the benchmark model (solid line).  
This result suggests that the rank constrained model can effectively capitalized on 
the value of ranking information when the information is perfect.

\begin{figure}[h!]
\begin{center}
\includegraphics[scale=0.60]{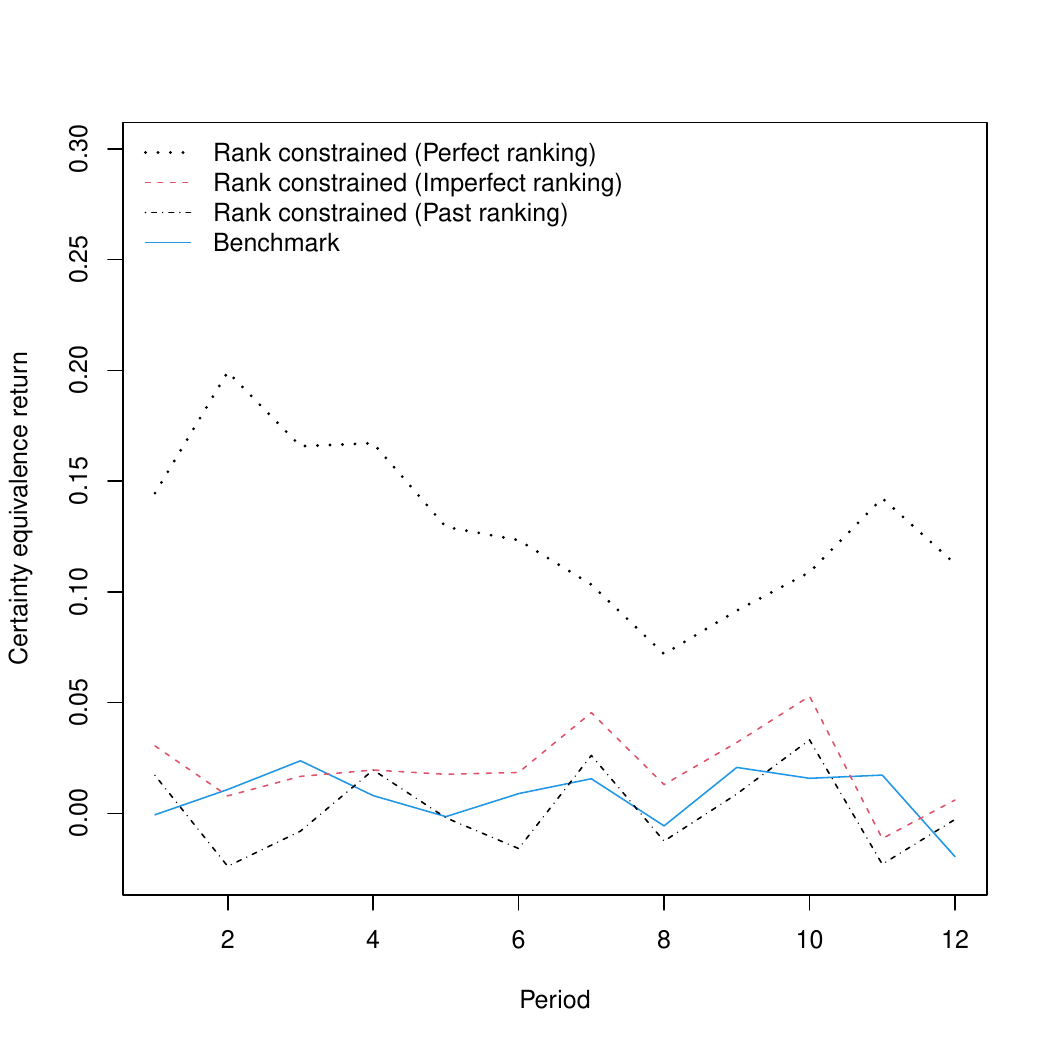}
\end{center}
\caption{Certainty equivalence returns (CEQ) of the 12 twelve-month periods from the 
rank constrained models with the perfect ranking (dotted line), the past rankings (dot-dashed line), 
the uncertain ranking (dashed line) 
and the benchmark model (solid line).  
}\label{fig:emport1}
\end{figure}

\section{Portfolio with imperfect qualitative information}

In the previous sections, we assume that the qualitative information in the form of a complete ranking is perfect.  
In practice, ranking information is hardly perfect.  
We repeat the experiments in Section~\ref{sec:onefactor} with imperfect ranking information.
To simulate an imperfect ranking, 
we observe past ranking and mix them with the perfect ranking.
In this study, the past ranking is defined as the ranking of the monthly asset returns one month prior to the target month.  
For example, a portfolio with rank constrained model with the past ranking for month 01/2011 
is formed as that with perfect ranking except that the ranking that is used in the estimation process is the 
ranking of the asset returns of month 12/2010.  
The performances of the portfolios with the past ranking are shown by the dot-dashed line in Figure~\ref{fig:emport1}.  
Observe that the portfolios with the past ranking (dot-dashed line) do not consistently outperform 
the benchmark (solid line), suggesting that the past ranking offers no obvious benefit to the portfolio selection process.

To additionally investigate a more practical aspect of the model, we observe the performances of 
the portfolio based on imperfect rankings.  To form a portfolio with an imperfect ranking for a target month, 
we follow the same process as that of the perfect ranking, 
but replace the perfect ranking with a ranking randomly chosen between the perfect ranking and the past ranking. 
In particular, to form a portfolio of a target month $t$, with probability 0.25, we employ the perfect ranking in the 
estimation process, and, with probability 0.75, we employ the past ranking.  
The performances of the portfolio with these imperfect rankings are shown by the dashed line in Figure~\ref{fig:emport1}.  
Observe that the performances of the portfolios with the imperfect rankings are mostly higher 
than those of the benchmark portfolios but with smaller margins than those corresponding to the perfect ranking.  
This suggests that, even with an imperfect ranking, benefits from the model can still be observed.  

\section{Conclusion}

Incorporating qualitative input in the form of ranking has the potential to enhance the accuracy of statistical estimates for variables. 
This improvement is particularly notable in high-dimensional settings when the variables of interest are independent and follow a standard normal distribution. According to Theorem~\ref{thm:limqual}, the statistical estimates obtained using ranking input can be exact in such cases. 
In scenarios with a moderate number of dimensions, there exists an efficient numerical algorithm that computes these estimates, enabling their practical utilization. 
This computational tool also facilitates an examination of the convergence speed of the estimates and the impact of the degree of reinforcement in the input. However, when the variables exhibit a uniform correlation structure, the accuracy of the estimates is compromised by the correlation coefficient, which represents the dependence among the variables.

In the portfolio selection problem, the estimation error caused by the uniform correlations 
does not compromise the quality of the optimal portfolio performance. 
When the correlation structure is non-uniform, we propose using the one factor model as an approximation for the correlation matrix, enabling the continued use of the computational method. 
To assess the model  performance in a more realistic context, we apply this approximation technique in a study using real financial data. 
Comparing the results to a benchmark, we observe that the portfolio formed by the rank constrained model demonstrates superior performance when the rankings are perfect.

\bibliographystyle{Chicago}

\bibliography{statqualref}
\end{document}